\documentclass[aps,pra,twocolumn,floatfix,superscriptaddress]{revtex4-2}
\usepackage{float}
\usepackage{amsmath}
\usepackage{amssymb}
\usepackage{graphicx}
\usepackage{mathrsfs}
\usepackage{hyperref}
\usepackage{lipsum}
\usepackage[procnames]{listings}
\usepackage{layouts}
\usepackage{epstopdf}
\usepackage{dsfont}
\usepackage{bbm}
\usepackage[normalem]{ulem}
\usepackage{tabularx}

\usepackage{amsmath,amssymb,amsthm}
\usepackage{algorithm}
\usepackage{algpseudocode}
\usepackage{float}

\usepackage[dvipsnames]{xcolor} 
\usepackage{ragged2e}           
\usepackage{tikz}
\usetikzlibrary{
    quantikz2,
    decorations.pathmorphing
}

\newcolumntype{M}{>{\displaystyle}r@{\hspace{1pt}\pm\hspace{1pt}}l}

\begin{document}

\title{Fault-Tolerant Logical Operations and Efficient State Preparation \\ in Modular Quantum Architectures with Noisy Interfaces}

\author{Siddardha Chelluri}
\author{Riccardo Mengoni}
\affiliation{Welinq, 14 rue Jean Macé, 75011 Paris, France}
\author{Tom Darras}
\affiliation{Welinq, 14 rue Jean Macé, 75011 Paris, France}
\author{Julien Laurat}
\affiliation{Laboratoire Kastler Brossel, Sorbonne Université, CNRS,
ENS-Universite PSL, Collège de France, 4 Place Jussieu, 75005 Paris, France}
\author{Eleni Diamanti}
\affiliation{LIP6, Sorbonne Université, CNRS, 4 Place Jussieu, 75005 Paris, France}
\author{Ioannis Lavdas}
\email[email: ]{ioannis.lavdas@welinq.fr}
\affiliation{Welinq, 14 rue Jean Macé, 75011 Paris, France}

\begin{abstract}
    Modular quantum computing is a leading paradigm for scaling quantum computation beyond the resource limitations of monolithic devices. In this architecture,  multiple quantum processing units (QPUs), employing identical or distinct qubit modalities, are interconnected via shared entanglement.  Here, we investigate how errors at module interfaces and within individual QPUs affect fault-tolerant computation when qubits are encoded using the rotated surface code. Going beyond the logical-memory benchmark, we perform circuit-level simulations of fault-tolerant nonlocal CNOT gates implemented via lattice surgery between QPUs connected by noisy Bell pairs, and analyze the resulting logical error rates. Our results show that interfaces can tolerate noise up to an order of magnitude higher than intra-QPU noise, with only a minor reduction in the fault-tolerance threshold. We further develop an efficient protocol for preparing distributed fault-tolerant logical GHZ states, reducing ancilla overhead, time, and nonlocal Bell-pair consumption. We show that ancilla minimization in this setting is equivalent to a vertex-cover problem on an associated graph, and  introduce a polynomial-time heuristic algorithm for finding low-overhead solutions. Our results provide quantitative evidence that distributed quantum error correction can enable scalable, fault-tolerant quantum computation in modular architectures.

\end{abstract}

\maketitle

\par  \textit{Introduction.---} 
Distributed quantum computing offers a scalable route beyond monolithic devices by distributing  algorithms across multiple quantum processing units (QPUs) interconnected via entanglement and classical communication~\cite{Caleffi_2024,Barral:2024cef,Diamanti:26}.
Recent advances have led to concrete modular architectures based on atomic memories and photonic interconnects~\cite{cirac1999dqc,monroe2014modular,nickerson2014scalable,sunami2025scalable}. In such architectures, nonlocal quantum operations constitute the fundamental primitive for distributed quantum computation, enabling interactions between qubits hosted on different QPUs. Nonlocal operations are typically realized using shared entanglement together with teleportation-based protocols, making Bell-pair generation rates, classical communication rounds, and communication latency key resources~\cite{gottesman1999teleportation,eisert2000nonlocal}. These resource requirements have motivated extensive research on distributed circuit compilation and partitioning, including hypergraph-based methods and gate-reordering techniques that minimize the overhead associated with inter-QPU communication~\cite{andresmartinez2019hypergraph,mengoni2026reordering}.

However, such approaches are generally studied at the level of unencoded circuits and do not directly address the additional overheads introduced by logical encoding and quantum error correction. Consequently, protocols that are efficient at the unencoded-circuit level need not remain efficient when implemented fault-tolerantly using logical qubits. Moreover, previous studies of distributed quantum error correction have largely focused on logical-memory benchmarks, which characterize the ability to preserve encoded quantum information under repeated error correction rounds, rather than on threshold analyses based on the implementation of fault-tolerant logical operations between QPUs~\cite{de_Bone_2024, sinclair2024faulttolerantopticalinterconnectsneutralatom,jacinto2026network, sutcliffe2025distributed, moylett2026logicalgatesfloquetcodes}. Understanding the performance of such operations is essential for assessing the viability of scalable distributed quantum computation.  
Since no quantum error-correcting code admits a universal set of transversal gates \cite{eastin2009transversal}, scalable fault-tolerant quantum computation must rely on alternative constructions for logical operations. Lattice surgery \cite{horsman2012latticesurgery, Litinski_2018,Litinski_2019, Chamberland_2022} has emerged as one of the most powerful approaches, enabling logical entangling gates through sequences of stabilizer measurements rather than direct transversal implementations. Although lattice surgery is now well established for monolithic surface-code processors \cite{KITAEV20032,horsman2012latticesurgery,litinski2019surfacecodes}, its extension to distributed architectures connected by noisy inter-QPU links remains largely unexplored \cite{haug2025latticesurgerybellmeasurements}, particularly at the level of circuit-level fault-tolerant logical gates. 

In distributed implementations, the cost of a logical operation is determined not only by the underlying error-correcting code but also by the allocation of logical and ancilla patches across QPUs and by the scheduling of lattice surgery operations. Additional ancilla patches increase the physical-qubit footprint within each module, whereas longer operation sequences increase execution time and exposure to memory errors. Consequently, distributed lattice surgery naturally gives rise to a resource-allocation problem in which different layouts can realize the same logical operation while exhibiting different trade-offs between spatial overhead and temporal cost. This problem has become increasingly relevant in light of recent experimental advances. Surface-code logical memories and logical operations have now been demonstrated on superconducting and reconfigurable atom-array platforms \cite{qec_google,acharya2023surfacecode,bluvstein2023logical,lin2026surfacecodelogicaloperations}, while parallel progress in photonic interfaces and networked quantum processors has brought modular architectures substantially closer to realization~\cite{bourassa2021photonicft,awschalom2021quics,covey2023neutralnetwork,aghaee2025networking}.

\begin{figure}[t!] 
\centering
\hspace*{-0.1cm}%
\includegraphics[width=1.3\linewidth]{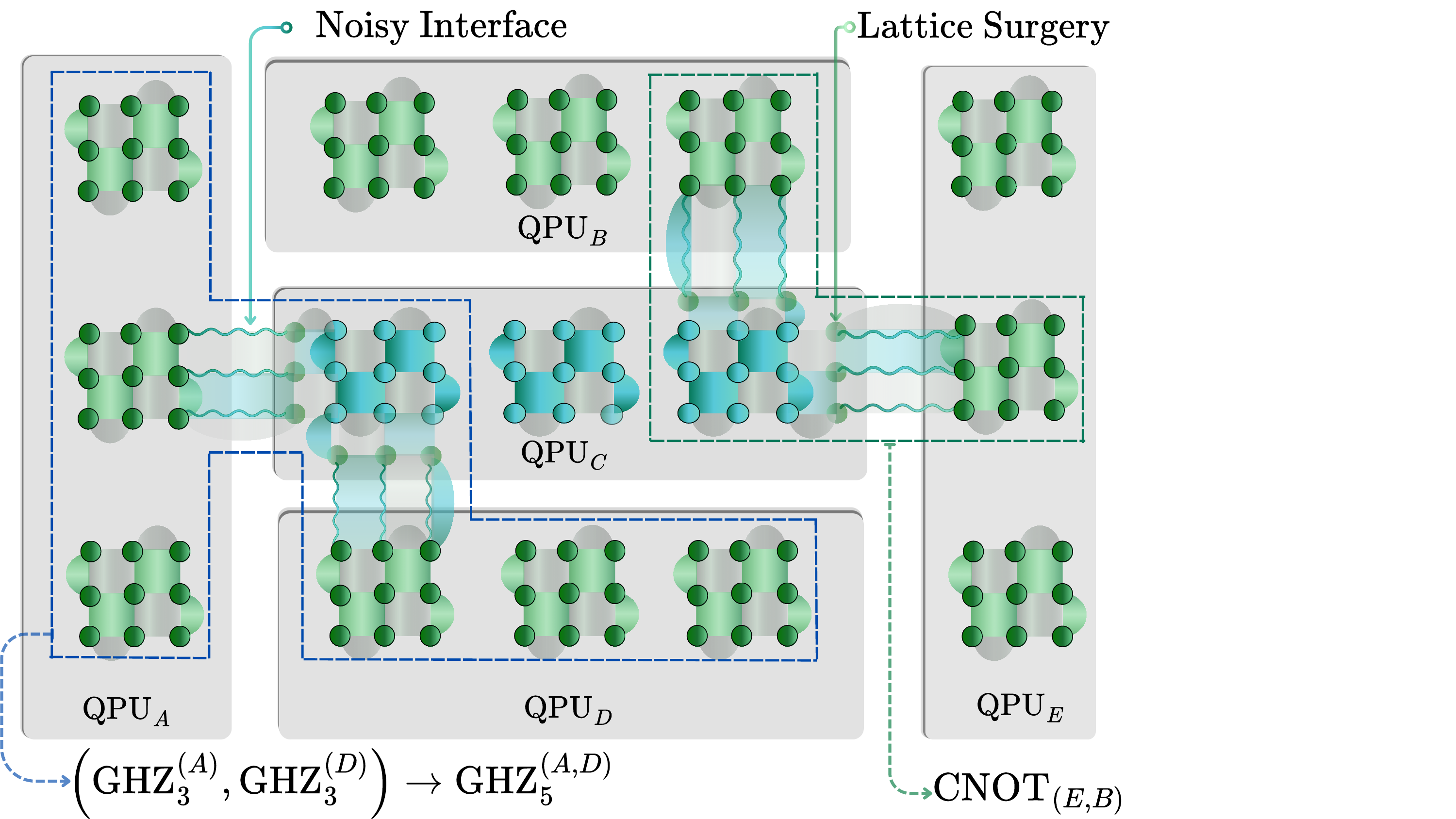} 
\caption{{Modular architecture with rotated surface-code encoding. 
Green-grey patches represent logical data qubits, while blue-grey patches logical ancilla qubits. Remote fault-tolerant CNOT gates are implemented by lattice surgery across noisy interfaces (wiggly lines): on the right part we show a CNOT controlled by a qubit in $\text{QPU}_{E}$, targeting one in $\text{QPU}_{B}$ via an ancilla qubit in $\text{QPU}_{C}$. Such CNOTs can then be used to fuse multiple locally prepared GHZ states into larger distributed GHZ states, enabling the creation of multipartite entanglement spanning several QPUs. This is seen in the left part of the figure, where a GHZ in $\text{QPU}_{A}$ is fused with one in $\text{QPU}_{D}$ via an ancilla in $\text{QPU}_{C}$.}}
\label{fig:Fig1}
\end{figure}

The present work investigates the performance of fault-tolerant logical operations between distinct QPUs at the circuit level. We study distributed fault-tolerant logical CNOT gates in rotated surface-code architectures \cite{PhysRevA.76.012305, PhysRevLett.108.180501}   and investigate how both local QPU noise and interconnect noise affect their performance. As a fundamental entangling primitive for distributed quantum computation, the logical CNOT also enables the construction of larger distributed logical circuits. In particular, we use nonlocal logical CNOT gates as building blocks for preparing distributed logical GHZ states across multiple QPUs, as illustrated in Fig.~\ref{fig:Fig1}. Together, these studies move beyond distributed logical-memory benchmarks by directly assessing circuit-level fault-tolerant computation in modular architectures. We quantify the contribution of interconnect noise to the logical error budget, analyze the lattice-surgery geometries required for distributed GHZ-state preparation, and identify the trade-offs between ancilla overhead, execution time, and inter-QPU entanglement consumption. Finally, we show that optimizing these resources can be formulated as a graph problem, enabling efficient low-overhead GHZ-state preparation protocols. Together, distributed logical CNOT gates and GHZ-state generation provide representative benchmarks for fault-tolerant modular quantum computing, capturing the interplay between interconnect noise, architectural constraints, and resource overheads.


\vspace{0.15 cm} 
\par  \textit{Distributed Fault-Tolerant CNOT Gate.---}  
We study a lattice-surgery implementation of the logical CNOT in a distributed architecture. Logical qubits are encoded in rotated surface-code patches, with the control patch hosted in one QPU while the ancilla and target patches hosted in a second QPU. The protocol follows the standard lattice-surgery CNOT (discussed in Appendix~\ref{app:App1}), with the difference that the joint operation between control and ancilla patches must be performed across the interface, as the two patches reside in different QPUs. Consequently, only stabilizer checks that cross the inter-QPU boundary require a nonlocal implementation, while all stabilizer measurements fully contained within a single QPU are implemented exactly as in the local lattice-surgery construction. The required cross-QPU CNOT gates are implemented via gate teleportation using noisy Bell pairs \cite{1635966}. Compared with the local protocol, this nonlocal implementation introduces additional errors from Bell-pair imperfections and from the extra gates involved in the gate-teleportation protocol (see Appendix~\ref{app:App3}). To incorporate these effects, we use a circuit-level noise model in which each single qubit and two-qubit  gate is followed by the corresponding depolarizing error channel. Local two-qubit gates are assigned error probability $p$, whereas nonlocal two-qubit gates are assigned error probability $\alpha p$, where $\alpha$ is the \textit{interface noise parameter} that captures the additional noise associated with Bell-pair generation and gate-teleportation. Explicitly, a two-qubit gate $U$ is followed by:
\begin{equation}
\rho \mapsto (1-p_g)\,U\rho U^\dagger
+ \frac{p_g}{15}\sum_{P \in \mathcal{P}_2 \setminus \{II\}}
P(U\rho U^\dagger)P^\dagger ,
\end{equation}
where $\mathcal{P}_2=\{I,X,Y,Z\}^{\otimes 2}$, the sum runs over the 15 nontrivial two-qubit Pauli operators $P$, each applied with equal probability, and finally $p_g$ denotes the gate error probability: $p_g= \epsilon$ for local two-qubit gates and $p_g=\alpha \cdot \epsilon$ for nonlocal two-qubit gates. 
We use this nonlocal CNOT implementation as a building block to study fault-tolerance thresholds beyond 
memory experiments \cite{sinclair2024faulttolerantopticalinterconnectsneutralatom}.


\vspace{0.15 cm} 
\par \textit{Simulation Results.---}To estimate the logical failure rates of nonlocal logical CNOT implementations, we perform simulations using the rotated surface code under a realistic noise model. Three distinct cases are considered: $(a)$ a bipartite {configuration}, where the control qubit is located in  QPU$_A$, while  ancilla and target qubits are in QPU$_B$; $(b)$ a three-partite configuration, where  control, ancilla and target qubits are located in separate QPUs and $(c)$ a bipartite {configuration} where QPU$_B$ includes ancilla and two targets. For each case, we compare the distributed configuration with the corresponding monolithic one. In the latter, all two-qubit gates are assumed to have the same fixed error rate, obtained with $\alpha=1$. In the distributed implementation instead, two-qubit gates acting on qubits located in different QPUs are assigned a higher error rate, associated with $\alpha=20$, in order to account for the additional noise introduced by gate teleportation and Bell-pair noise,  as explained in the Appendix \ref{app:App3}. We further assume identical noise characteristics for control, ancilla and target as well as for data and syndrome qubits, reflecting the assumption that all physical qubits are equivalent. All simulations are performed with $50\,000$ runs, each comprising qubit initialization using rotated surface-code patches of a chosen code distance, followed by lattice surgery implementing the fault-tolerant CNOT and concluding with final logical qubit measurements.
The circuits are simulated using LOOM \cite{el_loom_github} and STIM \cite{gidney2021stim}, while decoding is carried out using PyMatching \cite{Higgott2025sparseblossom}. In case (c), where two CNOTs are applied, even a single CNOT failure is considered as logical failure. The noise model implemented is the single and two qubit depolarizing noise applied after each single and two-qubit gate, respectively.

The results of the simulations are shown in Figure~\ref{fig:three-column-composite}. Each figure includes the distributed architecture used to implement the fault-tolerant operation, together with the corresponding threshold plot showing how logical error rate $\varepsilon_{\text{logical}}$ varies as a function of  physical error rate $\varepsilon_{\text{phys}}$.  Additionally, each instance includes the threshold plot of the corresponding monolithic case. 
The threshold plots show that the distributed implementations exhibit a minor reduction in the fault-tolerance threshold relative to the corresponding monolithic implementations. Across the three cases considered, the gap between the distributed and monolithic thresholds ranges from $2\cdot10^{-3}$ to $3\cdot10^{-3}$, with the largest increase occurring for the two-CNOT implementation using two interfaces.
Therefore, the weak dependence on the inter-QPU gate error rate indicates that the threshold behavior is dominated by local gate noise. The key quantity is not the absolute number of nonlocal two-qubit gates, but rather its ratio with respect  to local two-qubit gates, which remains small for the architectures considered. A corresponding numerical analysis and bounds on this ratio are provided in Appendix~\ref{app:App2}. 

This result shows that modular architectures can scale through noisy interfaces while maintaining performance close to that of the monolithic case.

\begin{figure}[!t]
    \centering
        \hspace{-3mm}
        \includegraphics[width=1.00\linewidth]{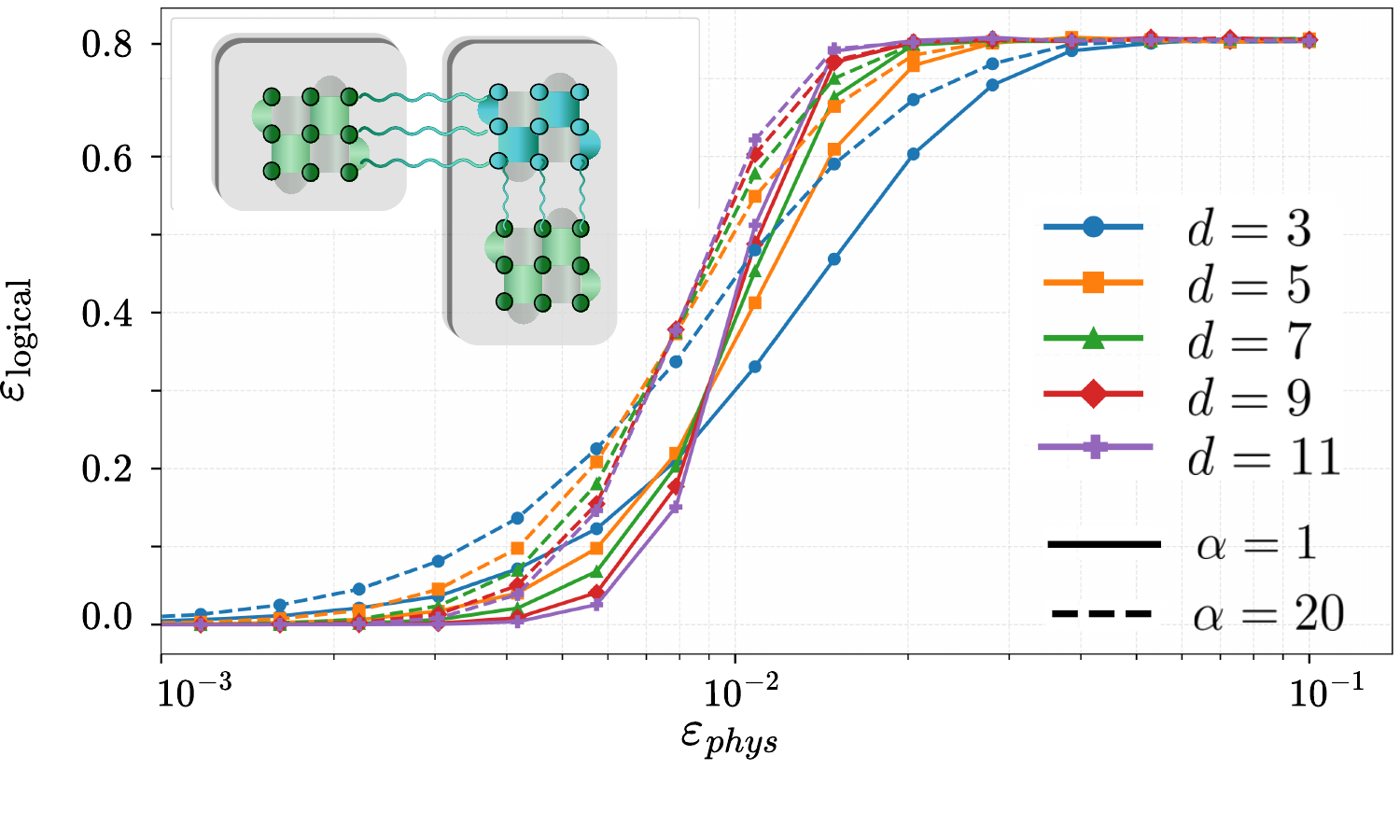}\\
        
        \includegraphics[width=1.00\linewidth]{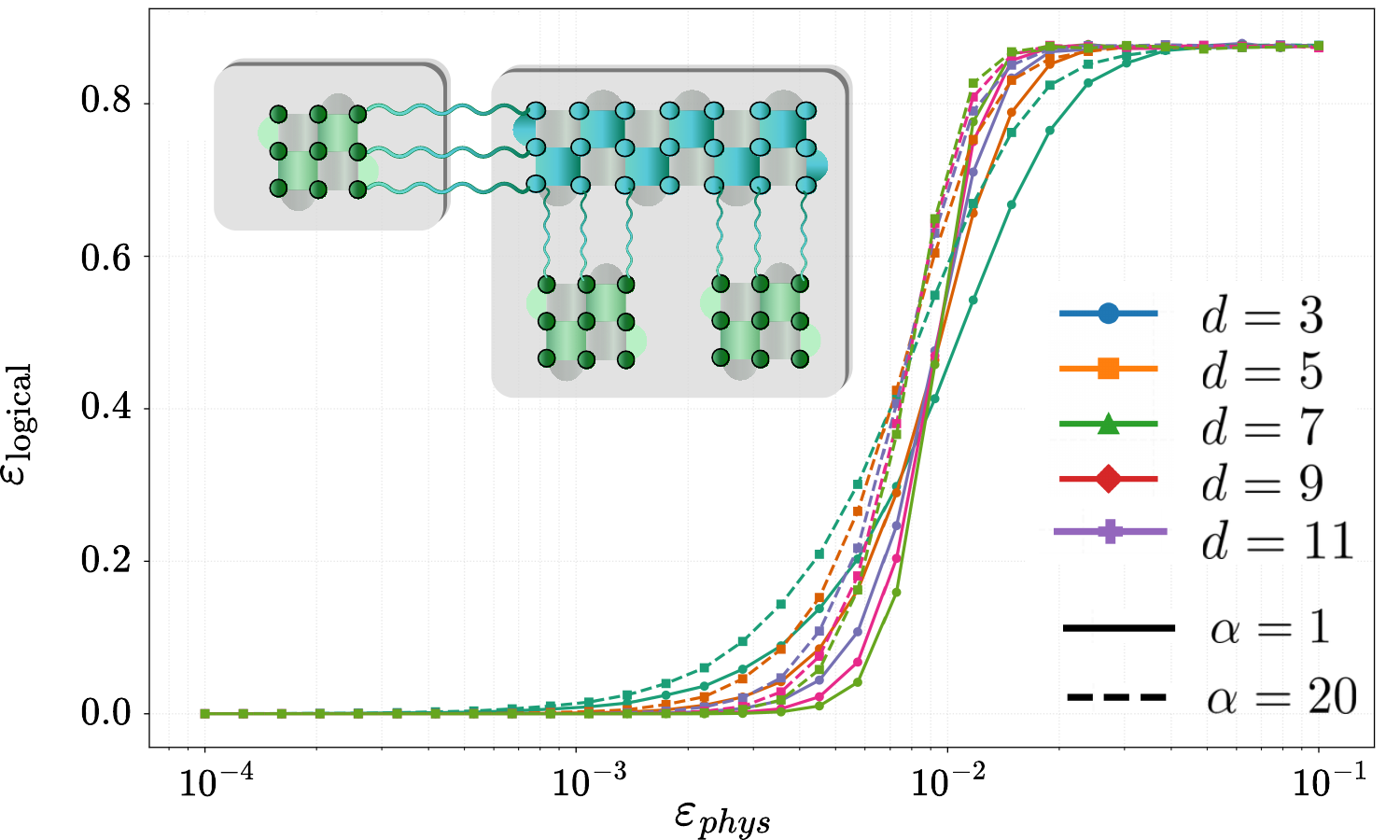}\\
        \vspace{0.5cm}
        \includegraphics[width=1.00\linewidth]{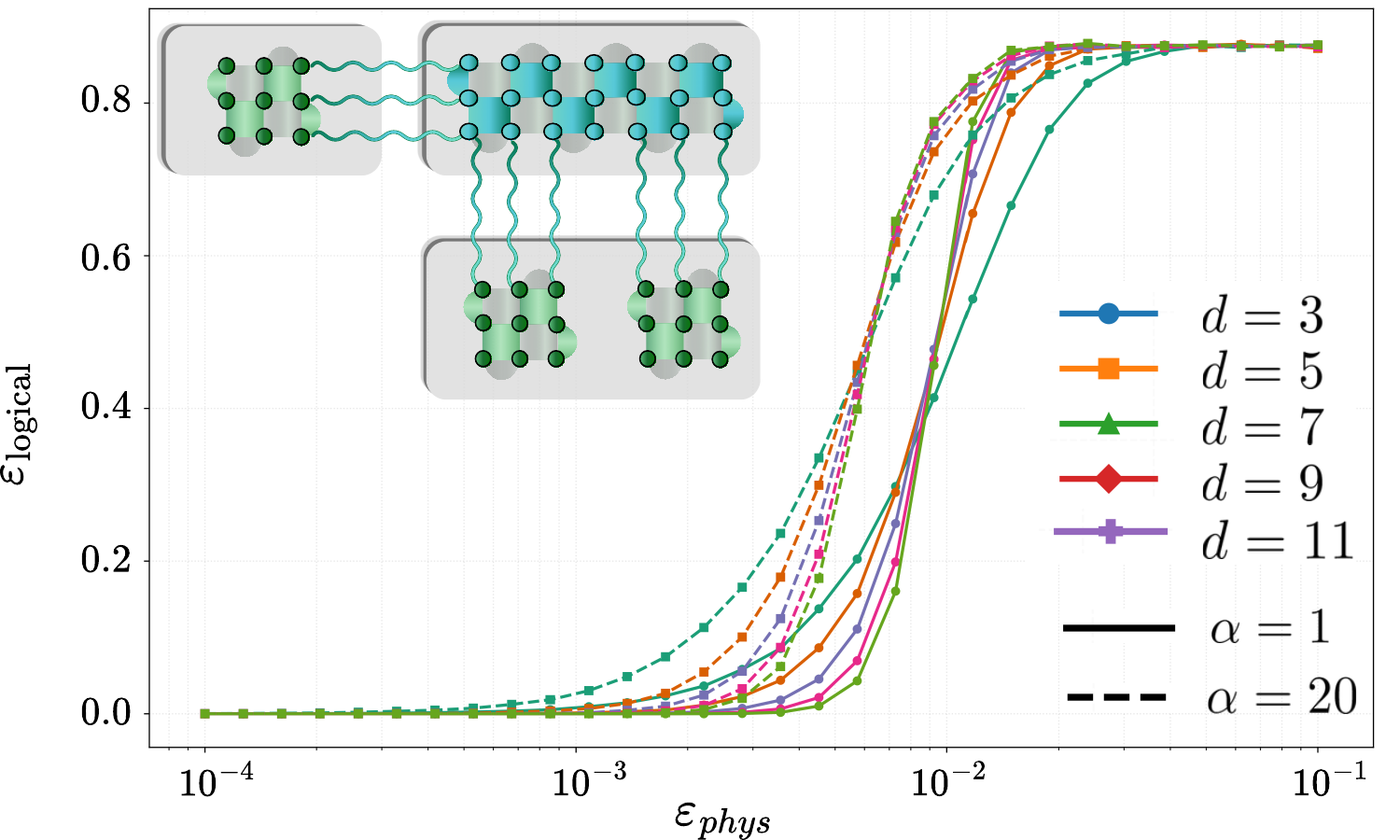}
        
    \caption{\justifying{Threshold plots (logical error rate $\varepsilon_{\text{logical}}$ as a function of the physical error rate $\varepsilon_{\text{phys}}$) for three distributed gate configurations, each in comparison with the threshold plot for the monolithic case. From top to bottom: a single CNOT with one inter-QPU interface; two-target CNOT with one interface; and two-target CNOT with two interfaces. In all cases, threshold behavior is dominated by local gate noise, whereas inter-QPU gate errors have only a limited effect. As a result, the thresholds of the distributed configurations remain close to the corresponding monolithic thresholds at $\alpha=1$, even when the inter-QPU error rate is increased to $\alpha=20$.
}}
    \label{fig:three-column-composite}
\end{figure}

\newtheorem{theorem}{Theorem}
\newtheorem{remark}{Remark}

\vspace{0.15 cm} 
\par \textit{Distributed Fault-tolerant GHZ State.---}The nonlocal CNOT construction introduced above is a fundamental primitive for distributed fault-tolerant multipartite entanglement generation. In particular, here we use it to generate a fault-tolerant GHZ state, motivated by the central role of GHZ states in quantum technologies \cite{Hyllus_2012, Kielinski_2024, Main_2025} and in benchmarking quantum computers \cite{moses2023racetrack,bao2024globalghz,cruz2019ghzw,chen2023ghz127,javadiabhari2025bigcats,chelluri2025shallowdepthghzstategeneration}.

To discuss the corresponding network construction, we describe the distributed architecture using graph theory.  Consider a connected QPU network described by a graph $G=(V,E)$, where $V$ is the set of $n=|V|$ QPUs and $E$ is the set of available inter-QPU links. Our goal is to prepare a distributed fault-tolerant logical GHZ state shared across the network. To this end, we develop an algorithm, inspired by the approach of Ref.~\cite{Chelluri2026} (see Algorithm~\ref{alg:ft_ghz_network} in Appendix~\ref{app:App4}). The algorithm first prepares local logical GHZ states on each QPU and then iteratively merges them across the network. After each merge, the qubit measured during the merging operation is reintroduced into the GHZ block via a local CNOT before proceeding to the next merge, until a single shared logical GHZ state is obtained. The algorithm can be optimized with respect to the resources required for distributed quantum computing using the lattice-surgery primitive. The relevant figures of merit are: (i)~the number of QPUs that must be equipped with ancilla patches, (ii)~the number of nonlocal Bell pairs consumed, and (iii)~the number of merging rounds. In the following, we examine how each of these quantities can be minimized, beginning with the ancilla count.

A naive strategy is to place an ancilla patch in every QPU. However, we show that optimizing the ancilla allocation reduces to a minimum vertex cover problem (Theorem \ref{alg:ft_ghz_network}). From now on, we refer to QPUs equipped with an ancilla patch as \textit{ancilla QPUs}. It is important to note that Step 4 of Algorithm \ref{alg:ft_ghz_network} requires an additional ancilla patch in each ancilla QPU. 

\begin{theorem}\label{th:ancilla}
Under Algorithm~\ref{alg:ft_ghz_network}, the minimum number of ancilla QPUs required to assemble a global logical GHZ state on a connected QPU network $G=(V,E)$ is
\[
a(G)=\min_{T\in\mathcal{T}(G)}\tau(T),
\]
where $\mathcal{T}(G)$ is the set of spanning trees of $G$ and  $\tau(T)$ denotes the minimum vertex-cover number of  $T$.
Moreover, for a fixed spanning tree $T$,
\[
a(T)=\tau(T)=\nu(T),
\]
where $\nu(T)$ is the maximum matching number of $T$.
\end{theorem}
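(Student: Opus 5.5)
The plan is to first pin down the combinatorial content of Algorithm~\ref{alg:ft_ghz_network}, and then prove three things in turn: a reduction to spanning trees, the vertex-cover characterization for a fixed tree, and K\H{o}nig's theorem for trees.

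\textbf{Modelling the algorithm.} I read the algorithm as follows. Each merge is a nonlocal lattice-surgery CNOT along an edge $(u,v)\in E$. As in the distributed CNOT construction above, such a merge needs a logical ancilla patch hosted on one of its two endpoints. The set of merges actually performed must connect all $n$ local GHZ blocks into one GHZ state. A minimal set of merges therefore forms a connected spanning subgraph, and we may take it to be a spanning tree $T$: any extra edge would close a cycle, and dropping it leaves a subgraph that is still connected, using no more ancilla QPUs. An ancilla QPU can serve all of its incident merges sequentially. This relies on the reintroduction step, where the measured qubit is returned to its block by a local CNOT, together with the extra patch from Step~4. Under this model, a set $A\subseteq V$ of ancilla QPUs is feasible for $T$ exactly when every edge of $T$ has an endpoint in $A$, i.e.\ when $A$ is a vertex cover of $T$.

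\textbf{Fixed tree: $a(T)=\tau(T)$.} For the upper bound, take a minimum vertex cover $C$ of $T$. Schedule the merges along $T$, for instance in BFS order from a root, and assign each edge to an endpoint in $C$. Then check that every merge can be executed by the algorithm, with each ancilla QPU reusing its patch across rounds. For the lower bound, I will show that if some edge $(u,v)$ of $T$ has neither endpoint in $A$, then the merge along that edge cannot be implemented. Since $T$ is a tree, there is no alternative route through $T$ connecting the two components of $T-(u,v)$, so the GHZ state cannot be assembled over $T$.

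\textbf{Minimizing over trees.} Because the algorithm can choose any spanning tree, minimizing over $\mathcal{T}(G)$ gives $a(G)=\min_{T}\tau(T)$.

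\textbf{K\H{o}nig's theorem: $\tau(T)=\nu(T)$.} A tree is bipartite, so K\H{o}nig's theorem gives $\tau(T)=\nu(T)$ directly. For a self-contained argument, I will use the leaf-greedy induction. Let $\ell$ be a leaf with parent $p$. Some maximum matching uses the edge $(\ell,p)$, and some minimum cover contains $p$. Removing $p$ and $\ell$ lowers both $\tau$ and $\nu$ by exactly one, and the claim follows by induction. This also yields the polynomial-time computation of $\tau(T)$.

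\textbf{Main obstacle.} The hard part is making the modelling claim rigorous, namely that ``ancilla QPUs must form a vertex cover of the merge tree'' is both necessary and sufficient. Necessity requires that no merge can borrow an ancilla from a third QPU. The three-partite CNOT configuration in Fig.~\ref{fig:Fig1} does exactly that, so I would need to argue that within Algorithm~\ref{alg:ft_ghz_network} merges are restricted to adjacent QPUs and use an endpoint ancilla. Failing that, I would need to show that such a relay can be re-expressed as merges along tree edges without increasing the ancilla count. Sufficiency requires checking that a single ancilla QPU can serve merges with several neighbours without conflict, which depends on the scheduling in Step~4.
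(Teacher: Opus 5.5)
Your proposal is correct and follows the paper's proof essentially step for step. Prune cycle edges from the connected fusion set to get a spanning tree, note that each merge needs an ancilla at one endpoint so that feasible ancilla sets on $T$ are exactly its vertex covers, minimize over $\mathcal{T}(G)$, and apply K\"onig's theorem since trees are bipartite. Your leaf induction is a fine self-contained substitute for citing K\"onig, but run it over forests, because deleting $p$ and $\ell$ can disconnect $T$.

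The paper does not derive the endpoint-ancilla requirement you flag as the main obstacle either. Its proof simply assumes that the nonlocal logical CNOT has its target in an ancilla-enabled QPU at an endpoint of the merge edge, as part of the model defining Algorithm~\ref{alg:ft_ghz_network}.
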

\noindent The proof of Theorem \ref{th:ancilla} is discussed in the Appendix~\ref{app:App4}. The minimum vertex cover problem is NP-hard in general graphs~\cite{karp1972_np}. To address this challenge, we introduce a heuristic approach for finding a low-cost vertex cover on the spanning tree. The main bottleneck is identifying the spanning tree that minimizes the size of the resulting vertex cover, as an exhaustive search over all possible spanning trees is computationally infeasible due to their exponential growth. Therefore, we consider spanning trees generated using breadth-first search (BFS) and depth-first search (DFS)~\cite{newman2010networks}, and compute the corresponding vertex covers. In addition, we introduce a greedy heuristic, referred to as \emph{Star-Search} (see Appendix~\ref{app:App4}), which constructs a spanning tree by iteratively selecting high-degree center vertices to maximize the number of dominated nodes while reducing the required ancilla resources. The numerical comparison of these three approaches is presented in Fig.~\ref{fig:vertex_cover}. The results show that spanning trees generated by DFS require, on average, the largest number of ancillas. This is because DFS tends to produce elongated, path-like trees, which generally have larger minimum vertex covers. In contrast, BFS typically produces more compact and highly branched trees and therefore requires fewer ancillas on average. Finally, our Star-Search algorithm consistently outperforms both DFS and BFS by producing spanning trees with smaller ancilla requirements across the considered graph instances.

\begin{figure}
    \centering
    \hspace*{-0.06\linewidth}
\includegraphics[width=1\linewidth]{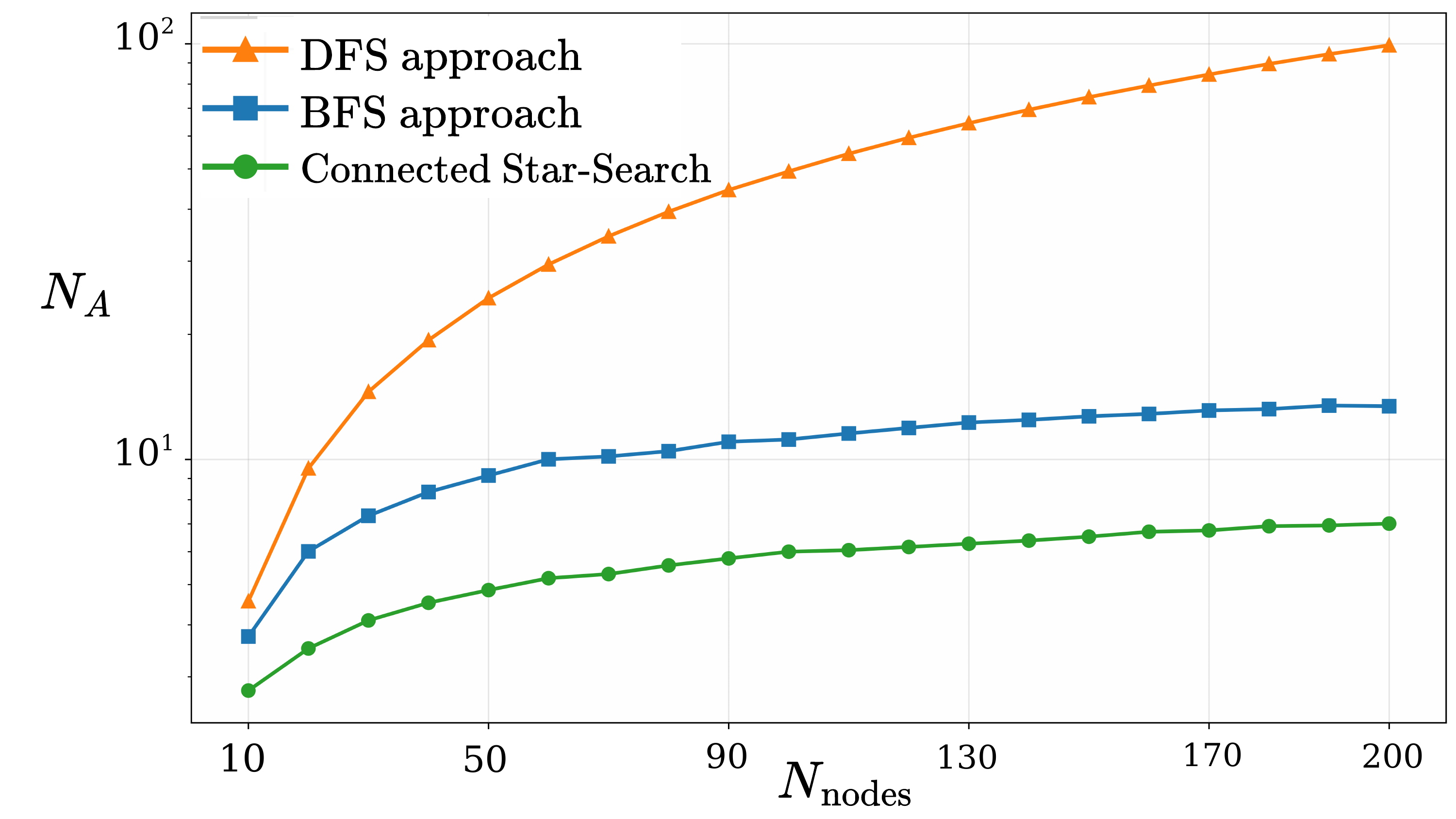}
    \caption{{Comparison of three methods for constructing a vertex cover on a spanning tree $T$. For each data point, we generate $100$ connected Erd\H{o}s--R\'enyi random graphs with edge probability $p=0.3$ and construct a spanning tree using depth-first search (DFS), breadth-first search (BFS), and Algorithm~\ref{alg:connectedstar}. The plot shows the average number of required ancilla QPUs, $N_A$, obtained by computing the exact minimum vertex cover of each resulting spanning tree.}}
    \label{fig:vertex_cover}
\end{figure}

Under our Algorithm~\ref{alg:ft_ghz_network}, each binary fusion of two GHZ blocks requires one nonlocal logical CNOT. These fusion operations are performed along the edges of a spanning tree connecting the QPUs, such that each fusion merges two previously disconnected GHZ blocks. A nonlocal logical CNOT between distance-$d$ rotated surface-code patches consumes $d(2d-1)$ physical Bell pairs. Since this cost is the same for every fusion, minimizing the total Bell-pair consumption is equivalent to minimizing the number of inter-QPU fusion operations. Starting from $n$ local GHZ states and ending with a single global GHZ state requires exactly $n-1$ inter-QPU fusion steps, since each fusion combines two disconnected GHZ blocks and reduces the number of blocks by one. Therefore, the minimum physical Bell-pair consumption is $  d(2d-1)(n-1)$.

While the Bell-pair consumption is fixed by the number of QPUs, the execution time depends on the ancilla allocation. There is therefore a trade-off between ancilla economy and fusion time. A star network minimizes the ancilla requirement, since its minimum vertex cover consists solely of the central QPU. However, every fusion must involve this central ancilla, forcing the $n-1$ fusion operations to be performed sequentially and resulting in a depth of $D_{\mathrm{star}}=n-1$. In contrast, a linear network requires ancilla patches on every alternate QPU, corresponding to a minimum vertex cover of size $\lfloor n/2\rfloor$. This larger ancilla budget allows independent fusion operations to be carried out in parallel on disjoint portions of the network, yielding a depth of $D_{\mathrm{linear}}=\lceil\log_2 n\rceil$ under a balanced fusion schedule. More generally, $\lceil\log_2 n\rceil$ is the minimum depth attainable by any binary-fusion protocol and is achieved whenever the remaining GHZ blocks can be fused in disjoint pairs during every round. Thus, reducing the ancilla requirement from $O(n)$ to $O(1)$ increases the state-preparation depth from $O(\log n)$ to $O(n)$.

The discussion above assumes that the final distributed GHZ state must include all logical patches in every QPU. A weaker requirement is that at least one logical patch from each QPU  be included. In that case, a patch consumed during an inter-QPU fusion need not be restored to its original local GHZ block, so the local restoration CNOTs can be omitted and the fusion schedule can be made more parallel. Under this relaxed requirement, each round may contain any set of inter-QPU fusions for which no QPU is used more than once. For a spanning tree $T$, this is equivalent to partitioning the edges of $T$, or equivalently to edge-coloring $T$, with each color representing one fusion round. Since every tree is bipartite, its edge-chromatic number equals its maximum degree $\Delta(T)$, and therefore the required number of rounds is
$ D_G=\Delta(T)$. 

\vspace{0.15 cm} 
\par \textit{Conclusion.---}In this work, we analyzed the effect of noisy interfaces in modular quantum architectures based on rotated surface codes. We focused on the implementation of logical nonlocal CNOT gates between QPUs via lattice surgery, with the required inter-QPU correlations supplied by noisy Bell pairs. Going beyond prior studies based primarily on logical-memory benchmarks, we performed circuit-level simulations and characterized the resulting logical error behavior. This provides a direct assessment of the reliability of distributed fault-tolerant logical entangling operations across QPU networks.

Our simulations show that these architectures are remarkably resilient to interface noise: inter-QPU operations can tolerate noise levels up to an order of magnitude larger than those within individual QPUs, with only a minor reduction in the fault-tolerance threshold relative to the monolithic case. Building on this nonlocal CNOT primitive, we further developed an efficient protocol for preparing distributed fault-tolerant logical GHZ states. We characterized the associated resource trade-offs in terms of ancilla overhead, state-preparation time, and nonlocal Bell-pair consumption, showed that ancilla minimization is equivalent to a vertex-cover problem on an associated graph, and provided a polynomial-time heuristic algorithm for finding low-overhead solutions.

These results provide quantitative guidance for the design of fault-tolerant modular architectures. Extending this framework to non-Clifford operations, communication-aware compilation, and decoding strategies that incorporate entanglement-generation latency will be important for assessing complete modular quantum-computing architectures.

\vspace{0.8cm}
\begin{acknowledgments}
 This research has received funding from the European Innovation Council (EIC) Accelerator under Grant Agreement No.~101188682 for the SQOUT project and from the Hybrid HPC Quantum Initiative (HQI) project which is part of Plan France 2030. J.L. is a member of the Institut Universitaire de France.
\end{acknowledgments}

\bibliography{ref_edited}

\clearpage
\setcounter{figure}{0}
 \renewcommand\figurename{\textbf{Supplementary Figure }}

\appendix

\section*{Supplementary Material}

\section{Local Fault-Tolerant CNOT} 

\label{app:App1}Logical Clifford operations admit a complete description in the framework of stabilizer theory, acting by automorphisms of the Pauli group. Given a unitary $\hat{\mathcal U}$ its action is fully specified by its symplectic action on a generating set of Pauli operators, namely, by how it conjugates a Pauli basis. The characterization of the unitary is hence reduced to tracking its induced transformation under the stabilizer group and the logical Pauli operators.

Fault-tolerant implementation of unitaries can be performed via lattice surgery, a method of dynamical modification of stabilizers based on operations on surfaces with boundaries whose topology encodes logical degrees of freedom \cite{horsman2012latticesurgery, Litinski_2018,Litinski_2019, Chamberland_2022}. We consider (rotated) surface codes for encoding logical information: each code patch is a planar surface with distinct boundary types, determining whether strings corresponding to logical $\hat{Z}$ and $\hat{X}$ operators terminate there. 
Operations between surface codes admit a natural interpretation at the level of surface topology. 
The operation of gluing two surfaces along a common boundary is referred to as \textit{merging} and corresponds to the removal of a common boundary of two surface code patches and replacing it with a new set of stabilizers. The inverse operation which splits a given surface code patch, introducing new boundary conditions and correlated logical states, is referred to as \textit{splitting}. 
In contrast to transversal implementations, applying individual parallel physical gates, lattice surgery enacts logical operations though measurement-induced transformations of surfaces and their boundaries.

Along these lines, the implementation of a CNOT, given by the unitary $\hat{\mathcal{U}}=\vert 0\rangle\langle 0\vert \otimes\mathbb{I}+\vert 1\rangle\langle 1\vert\otimes \hat{X}$, admits a simple description in the stabilizer formalism. Given a control qubit $(c)$ and target qubit $(t)$ the operation reads: 
\begin{equation}
    X_c\to X_c X_t,\quad Z_c\to Z_c,\quad X_t\to X_t
\end{equation} 
The state of the logical qubit controlling the operation is encoded to a distance-$d$ rotated surface code patch, $\mathcal{C}_{c}[d]$, while the one of the target to a distinct patch $\mathcal{C}_{t}[d]$. The encoding of the logical control and target qubits is supplemented by an auxiliary surface code patch $\mathcal{C}_{aux}[d]$, prepared in a known logical state, acting as a mediator of the joint parity measurements comprising the CNOT operation.
The sequence of the two key stabilizer measurements is initiated with a joint $\hat{Z}^{(c)}_L \hat{Z}^{(a)}_L$ measurement between the control and ancilla, followed by a second joint $\hat{X}^{(a)}_L \hat{X}^{(t)}_L$ measurement between ancilla and target. These measurements are implemented through lattice-surgery operations between the code patches: merging of the control and auxiliary patches followed by splitting and merging of the auxiliary and target patches followed by splitting. As a final step, the ancilla is measured in the Z-basis and conditional Pauli corrections are applied depending on the measurement results.

\begin{figure}
\centering
\includegraphics[width=1.1\linewidth]{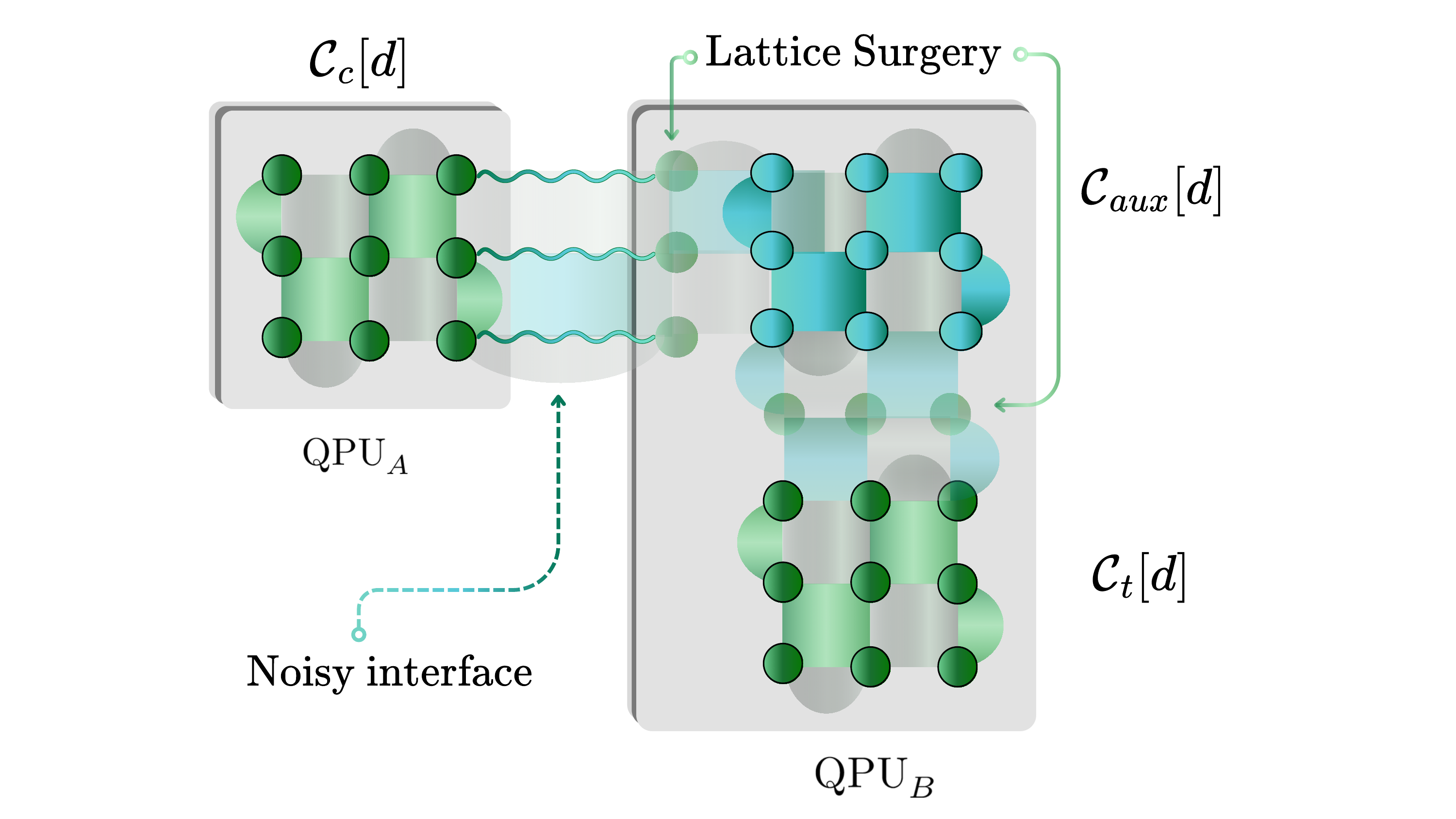} 
\caption{{Implementation of a distributed fault-tolerant CNOT. The control patch ($\mathcal{C}_{c}[d]$) resides in QPU$_A$, while the target ($\mathcal{C}_{t}[d]$) and auxiliary ($\mathcal{C}_{aux}[d]$) patches reside in QPU$_B$. Non-local lattice surgery between the control and auxiliary patches is mediated by a noisy interface, while the auxiliary-target operation remains local. We assume  that the physical data and syndrome qubits required for the control-auxiliary merge operation are all initialized in the second QPU. The light-colored regions indicate the non-local joint stabilizers, along with the corresponding qubits initialized for the lattice surgery operations.}}
\label{fig:non_local_cnot}
\end{figure}

\section{Distributed fault-tolerant multi-CNOT operations}
\label{app:App2} 
We consider a configuration consisting of multiple CNOT gates controlled by the same qubit. This sequence is realised fault-tolerantly via standard lattice surgery and is based on the nonlocal implementation of a single CNOT as explained in the main text and illustrated in Figure \ref{fig:non_local_cnot}. In our case, a rotated surface code patch encoding a logical control qubit interacts with a long ancilla which, in turn, interacts with a set of identical patches, each encoding a target qubit.

We are studying a set of architectural realizations ranging from the monolithic one, where all component patches are found in the same QPU, to the fully linear distributed setting, where the QPU hosting the control qubit is distinct from the rest of the  QPUs each of which is hosting an auxiliary and a target qubit. The QPUs are interconnected via noisy interfaces: the first interface is the one interconnecting the QPU hosting the control patch with the First QPU hosting an auxiliary and a target patch. The ones that follow, interconnect linearly the auxiliary patches of the distinct QPUs.

Given the code distance ($d$) of the surface code patches, the number of processors comprising the architecture ($N_p$) and the number of code patches encoding the target qubits of a controlled operation ($N_t$), we deduce that our configuration counts $2N_t+1$ code patches, and can have at maximum $2N_t$ interfaces, out of which the $N_p-1$ are nonlocal interfaces between ancillas and $2N_t-N_p+1$ are local interfaces between ancillas and target code patches. Focusing on the qubit counting, we consider a distance-$(d)$ rotated surface code patch, which includes $d^2$ physical data qubits and $d^2-1$ syndrome qubits (and hence equal number of stabilizers). After a merge of two such patches, the resulting configuration includes $2d^2+d$ data and $2d^2+d-1$ syndrome qubits. The total number of qubits required for the implementation of a local logical CNOT, as described in the first part of the appendix, is $6d^2+4d-1$.
In order to quantify the communication overhead in this distributed architecture, we focus on the ratio of nonlocal operations ($N_{nl}$) over the total number of local operations ($N_l$) of a distributed configuration, which we derive using the qubit and operation counting for the above architecture, taking into account the fact that stabilizers in the surface code consist of either four operations (bulk) or of two operations (boundary). The expressions read: 

\begin{equation} \mathcal{N}_{nl}(d,N_i) = d(2d-1)(N_p -1) \label{eq:nl} 
\end{equation} 

\begin{equation}
\begin{aligned}
\mathcal{N}_{l}(d,N_p,N_t)
={}& 4dN_t\Bigl((d+1)^2+(d^2-2)\Bigr)+\\- dN_p(2d-1) 
 &+ d(2d-1)^2+ 2d(d-1), && \label{eq:local} 
\end{aligned}
\end{equation}

\begin{figure}[!t]
   \centering
   \hspace{-6.5mm}
   \includegraphics[width=1.07\linewidth]{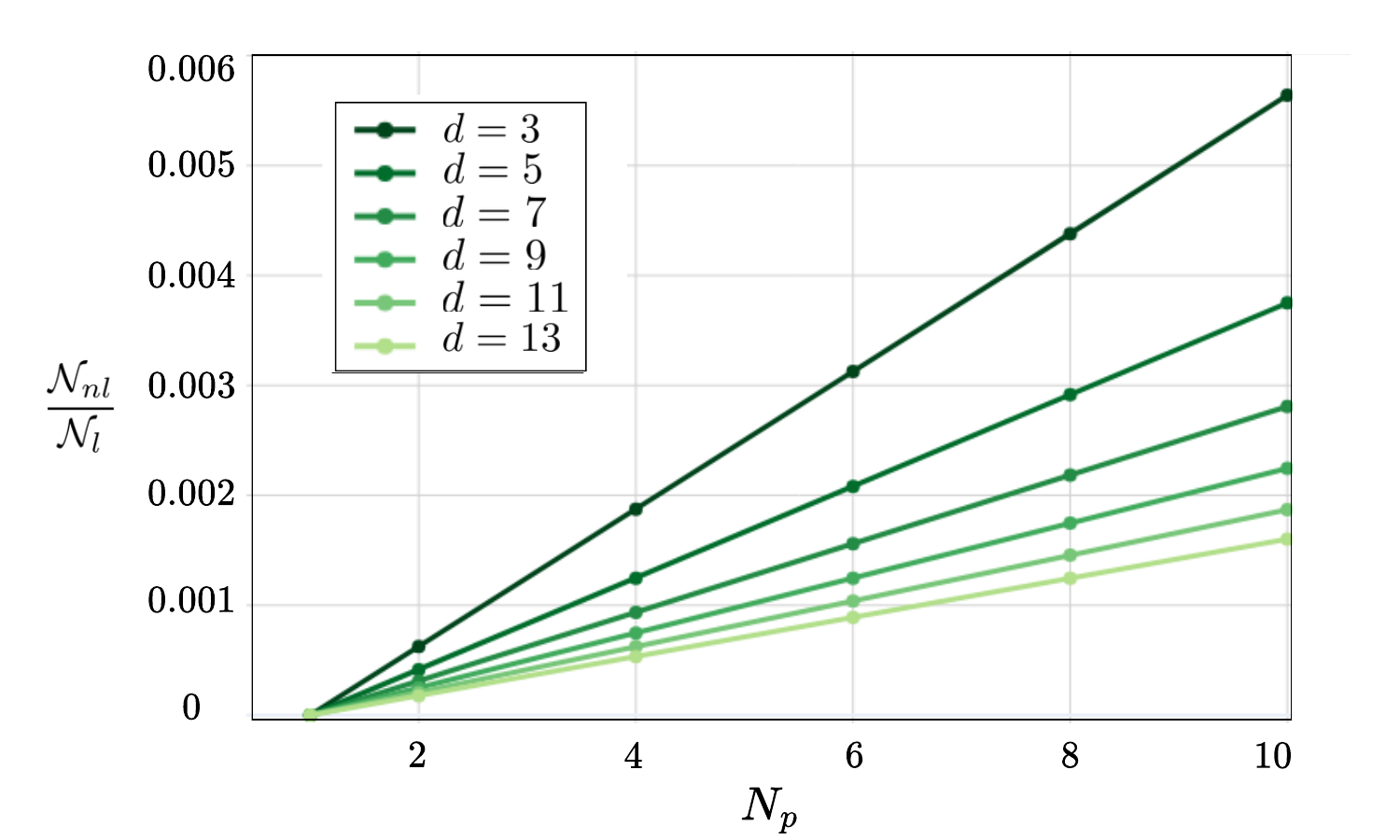}
  \caption{\justifying{Nonlocality ratio $\frac{\mathcal{N}_{nl}}{\mathcal{N}_{l}}$ as a function of the number of QPUs $N_{p}$ in a distributed configuration. The plotted example corresponds to the case of eight CNOTs implemented using distance-$d$ rotated surface code encoding.}}
  \label{fig:ratio}
   \end{figure}

We refer to the ratio $\mathcal{F}(d, N_p, N_t)=\frac{\mathcal{N}_{nl}}{\mathcal{N}_{l}}$ as the \textit{nonlocality factor}. From Figure~\ref{fig:ratio}, which shows the non-locality factor as a function of the number of QPUs in a given configuration, we observe that it grows sub-linearly with the number of QPUs and decreases with the code distance. This behavior indicates that, even though distribution introduces additional inter-QPU communication, the relative contribution of nonlocal operations becomes less significant for larger code distances. Hence, distributed implementation of logical operations remain increasingly dominated by local fault-tolerant processing as the code distance scales.

\section{Choice of parameter $\alpha$}
\label{app:App3}
For a local CNOT gate, assume four independent elementary fault locations, corresponding to Pauli errors before and after the gate on each of the two qubits. If each fault location has error probability \(\epsilon\), the probability that no error occurs is
\begin{equation}
   (1-\epsilon)^4 . 
\end{equation}
Therefore, the probability that at least one error occurs is
\begin{equation}
\alpha_{\mathrm{loc}}
=
1-(1-\epsilon)^4 .
\end{equation}

Thus, expanding  for small \(\epsilon\), the leading order in the physical error probability is
\begin{equation}
\alpha_{\mathrm{loc}}
=
4\epsilon
+
O(\epsilon^2)
\simeq
4\epsilon .
\end{equation}
For the gate teleportation implementation of a nonlocal CNOT, we assume that  noisy physical operations are as shown in Fig.~\ref{fig:noisy-telegate}. The corresponding circuit contains $14$ fault locations. Since a local CNOT already accounts for $4$ of these, the gate-teleportation protocol introduces $10$ additional fault locations. Therefore, to first order in $\epsilon$, its additional error contribution is
\begin{equation}
\alpha_{\mathrm{TeleGate}}\simeq 10\epsilon .
\end{equation}

We also include the error due to the finite fidelity of the Bell pair. Let \(F_{\mathrm{Bell}}\) denote the fidelity of the distributed Bell pair with respect to the ideal state
\(\ket{\Phi^+}\). We model the effective accumulation of errors during Bell-pair preparation and distribution by introducing a coefficient $\alpha_{\mathrm{Bell}}$. In the absence of a hardware-specific entanglement-generation model, we set 
\begin{equation}
\alpha_{\mathrm{Bell}} \simeq 10 \epsilon.
\end{equation}
This corresponds to a Bell-pair infidelity that is an order of magnitude larger than the elementary intra-QPU error probability
\begin{equation}
F_{\mathrm{Bell}}
=
1-\alpha_{\mathrm{Bell}}
\simeq
1-10\epsilon
\end{equation}
Finally the total additional error strength of the nonlocal CNOT is,
\begin{equation}
\alpha
=
\alpha_{\mathrm{Bell}}
+
\alpha_{\mathrm{TeleGate}} \simeq 20 \epsilon.
\end{equation}

\begin{figure*}[!t]
\centering
\resizebox{0.8\textwidth}{!}{%
\begin{quantikz}[decoration=snake]
\wireoverride{n}
& \wire[l][1]["q_a"{above,pos=0.2}]{a}
& \qw
& \gate{\epsilon_1}
  \gategroup[2, steps=13,
  style={dashed, color=black, inner xsep=7pt, inner ysep=10pt, yshift=10pt, rounded corners},
  label style={label position=above, anchor=south, yshift=-6pt},
  background]{\textsc{QPUa}}
  \gategroup[3, steps=5,
  style={inner sep=2pt},
  background]{CE}
& \ctrl{1}
& \gate{\epsilon_2}
& \qw
& \qw
& \qw
& \qw
& \qw
& \qw
  \gategroup[3, steps=5,
  style={inner sep=2pt},
  background]{CD}
& \qw
& \gate{\epsilon_3}
& \gate{Z}
& \gate{\epsilon_4}
& \qw
\\
\phase[style={draw=OliveGreen, fill=OliveGreen}]{}\wireoverride{n}
& \wire[l][1]["c_a"{above,pos=0.2}]{a}
& \gate[2, disable auto height]{\epsilon_{Bell}}
& \gate{\epsilon_5}
& \targ{}
& \gate{\epsilon_6}
& \meter{}
& \gate{\epsilon_7}\wire[d][1]{c}
\\
\phase[style={draw=OliveGreen, fill=OliveGreen}]{}\wireoverride{n}
& \wire[l][1]["c_b"{above,pos=0.2}]{a}
&
& \qw
  \gategroup[2, steps=13,
  style={dashed, color=black, inner xsep=7pt, inner ysep=2pt, rounded corners},
  label style={label position=below, anchor=north, yshift=-8pt},
  background]{\textsc{QPUb}}
& \qw
& \qw
& \gate{\epsilon_8}
& \targ{}
& \gate{\epsilon}
& \ctrl{1}
& \gate{\epsilon}
& \gate{H}
& \gate{\epsilon_9}
& \meter{}
& \gate{\epsilon_{10}}\wire[u][2]{c}
\\
\wireoverride{n}
& \wire[l][1]["q_b"{above,pos=0.2}]{a}
& \qw
& \qw
& \qw
& \qw
& \qw
& \qw
& \gate{\epsilon}
& \targ{}
& \gate{\epsilon}
& \qw
& \qw
& \qw
& \qw
& \qw
& \qw
\arrow[from=2-1, to=3-1, OliveGreen, decorate, -] {}
\end{quantikz}
}
\caption{{Noisy gate teleportation protocol, {structured via Cat-Entangler (CE) and Cat-Disentangler (CD) primitives} with {Pauli errors from input Bell-pair, gates and measurements.}}}
\label{fig:noisy-telegate}
\end{figure*}

\section{Algorithm for distributed GHZ construction}\label{app:App4}

Algorithm~\ref{alg:ft_ghz_network} summarizes the procedure used to generate a distributed logical GHZ state over a connected QPU network \(G=(V,E)\), where \(V\) denotes the set of QPUs and \(E\) the available inter-QPU links. We extend the GHZ-fusion protocol of Ref.~\cite{Chelluri2026} to logical qubits encoded in rotated surface-code patches hosted on distinct QPUs. The protocol begins by preparing a local logical GHZ state within each QPU, and then successively fuses these local GHZ blocks along a connected spanning set of inter-QPU links until a single logical GHZ state shared across all QPUs is obtained.

\begin{algorithm}[H]
\caption{Fault-tolerant GHZ assembly on a QPU network}
\label{alg:ft_ghz_network}
\begin{algorithmic}[1]
\Require Connected QPU network $G=(V,E)$.
\Ensure A global logical GHZ state shared by all QPUs.
\State For each $v \in V$, prepare a local logical GHZ state by
       initializing the RSC patch and performing a smooth split.
\State Choose a connected edge set $F \subseteq E$ that spans all QPUs.
\State Merge two GHZ states using the edges in $F$,
       as described in Protocol 3 in Ref.~\cite{Chelluri2026}.
\State Re-integrate the qubit measured in the previous step into the
       GHZ block by performing a CNOT with the local GHZ state.
\State Repeat the two preceding steps until all GHZ states have
       been merged into a single logical GHZ state shared by all QPUs.
\end{algorithmic}
\end{algorithm}

\section{Optimal ancilla allocation } \label{app:App5}
\noindent The Proof of Theorem \ref{th:ancilla} is provided below.

\begin{proof}
Algorithm~\ref{alg:ft_ghz_network} first prepares a local logical GHZ state in each QPU and then merges these states by applying nonlocal logical CNOTs along a connected set of inter-QPU edges
$F\subseteq E$. Since the purpose of $F$ is only to connect all QPUs, any cycle edge in $F$ is redundant: deleting it leaves the set of QPUs still connected and does not obstruct the remaining GHZ-merging steps. Repeating this deletion process reduces $F$ to a spanning tree $T\subseteq G$. Hence any feasible protocol may be reduced to one whose fusion edges form a spanning tree.

Fix such a spanning tree $T$. Let $A\subseteq V$ be the set of QPUs that host ancilla patches. In every merge step across an edge $(u,v)\in E(T)$, the nonlocal logical CNOT must have its target in an ancilla-enabled QPU. Therefore at least one endpoint of every used edge
must belong to $A$, i.e.,
\[
\forall\, (u,v)\in E(T),\qquad u\in A \text{ or } v\in A.
\]
This is exactly the definition of a vertex cover of $T$. Conversely, if $A$ is a vertex cover of $T$, then every edge of $T$ has at least one ancilla-enabled endpoint, so each merge step required by Algorithm~\ref{alg:ft_ghz_network} is feasible. Thus, for a fixed tree $T$, feasible ancilla assignments are precisely the vertex covers of $T$, and the minimum ancilla cost on $T$ is $a(T)=\tau(T).$

Finally, since one is free to choose the spanning tree used for merging, the globally optimal ancilla cost is obtained by minimizing over all spanning trees of $G$, identified as $\mathcal{T}(G)$:
\[
a(G)=\min_{T\in \mathcal{T}(G)}\tau(T).
\]

Because every tree is bipartite, K\"onig's theorem applies and gives $\tau(T)=\nu(T)$ for every fixed spanning tree $T$. Hence \[
a(T)=\tau(T)=\nu(T),
\]
which identifies the optimal ancilla count on a fixed merge tree with both its minimum vertex cover and its maximum matching.
\end{proof}

\section{Numerical search for low-cover spanning trees}\label{app:App6}

By Theorem~\ref{th:ancilla}, ancilla minimization reduces to finding a spanning tree of the QPU network with small vertex-cover number. This tree-selection step is the main bottleneck: different spanning trees of the same network can lead to substantially different ancilla costs.

For any fixed tree $T$, the ancilla cost can be computed exactly and efficiently by dynamic programming. Root $T$ at a vertex $r$, and for each node $u$ define $I_u$ as the minimum cover size of the subtree rooted at $u$ when $u$ is included in the cover, and $O_u$ as the minimum cover size when $u$ is excluded. If ${\rm ch}(u)$ denotes the children of $u$, then
\begin{align}
I_u &= 1 + \sum_{v\in {\rm ch}(u)} \min\{I_v,O_v\}, \\
O_u &= \sum_{v\in {\rm ch}(u)} I_v,
\label{eq:treevcdp}
\end{align}
and the exact cover number is
\begin{equation}
\tau(T)=\min\{I_r,O_r\}.
\end{equation}

Thus, once a spanning tree is fixed, computing its ancilla cost is straightforward; the nontrivial task is to identify a spanning tree with low vertex-cover number. In the numerics, we compare three ways to choose the spanning tree: the BFS tree and the DFS tree from a chosen root, and a greedy connected-star heuristic designed specifically for ancilla reduction. The heuristic aims to build a tree with many leaves, motivated by the fact that in any tree the internal vertices form a vertex cover, so a larger number of leaves is generally associated with a smaller ancilla requirement.

\begin{figure}[h]
\begin{algorithm}[H]
\caption{Star-Search algorithm for ancilla placement}
\label{alg:connectedstar}

\begin{algorithmic}[1]
\Require A connected QPU network $G=(V,E)$
\Ensure A spanning tree $T\subseteq G$ and an ancilla set $A\subseteq V$
\State Compute the degree of every vertex in $G$
\State Choose a highest-degree vertex $c_0$ as the first center
\State Initialize the center set $S$ so that it contains only $c_0$
\State Initialize the dominated set $D$ so that it contains $c_0$ and all vertices adjacent to              $c_0$
\State Start with no tree edges selected
\State Repeat the next five steps (7--11) until every vertex of $G$ belongs to $D$
\State Among the vertices not already in $S$ but adjacent to at least one vertex in $S$, choose a           vertex $v^{\star}$ that would newly dominate the largest number of vertices; if there is            a tie, choose one of largest degree
\State Connect $v^{\star}$ to a neighboring vertex already in $S$; if there is a choice, use a              neighboring center of largest degree
\State Add $v^{\star}$ to the center set $S$
\State Add the new edge to the tree-edge set
\State Add $v^{\star}$ and all of its neighbors to the dominated set $D$
\State Initialize the tree $T$ with all vertices of $G$ and with the selected tree edges
\State For each remaining vertex $u$ not already in $S$, attach $u$ to a neighboring vertex in $S$          of largest degree
\State Compute $\tau(T)$ using Eq.~\eqref{eq:treevcdp} and recover a corresponding minimum vertex           cover $A$ by traceback
\State \Return $(T,A)$
\end{algorithmic}
\end{algorithm}
\end{figure}

Algorithm~\ref{alg:connectedstar} should be read as follows. The set $S$ contains the current center vertices, and the set $D$ contains the vertices already dominated by $S$. In each pass through the repeated block, the algorithm chooses one new center $v^{\star}$, connects it to the existing center set, adds the corresponding edge to the tree, and enlarges the dominated set. The loop stops once every vertex belongs to $D$. At that point the center set is connected, but some non-center vertices may not yet appear explicitly as leaves of the tree. The next step therefore attaches each remaining vertex directly to a neighboring center. The resulting spanning tree is then evaluated exactly by computing a minimum vertex cover on that tree, which yields the ancilla set. The heuristic is designed to produce spanning trees with many leaves, which are expected to have relatively small ancilla cost.

\begin{theorem}
\label{thm:connectedstarcomplexity}
Let $G=(V,E)$ be a connected graph with $n=|V|$ and $m=|E|$, represented by adjacency lists. Algorithm~\ref{alg:connectedstar} constructs a spanning tree $T\subseteq G$ and computes an optimal ancilla set $A$ for that tree in $O(nm)$ time and $O(n+m)$ memory.
\end{theorem}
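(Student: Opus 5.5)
We prove the two parts of Theorem~\ref{thm:connectedstarcomplexity} separately: first that the output $T$ is a spanning tree of $G$, then that the running time is $O(nm)$ and the memory $O(n+m)$. Optimality of $A$ for the fixed tree follows from the dynamic program of Eq.~\eqref{eq:treevcdp}, since Theorem~\ref{th:ancilla} identifies the ancilla cost on $T$ with $\tau(T)$.

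\emph{Correctness.} We maintain the invariant that the selected edges form a tree on the center set $S$. Initially $S=\{c_0\}$ with no edges. Each iteration adds a vertex $v^\star\notin S$ together with one edge to $S$, so the edge set stays a tree on $S$.

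The loop must never get stuck. While $D\neq V$, connectivity of $G$ gives an edge from $D$ to $V\setminus D$. Every vertex of $D\setminus S$ is adjacent to $S$, and every vertex of $S$ has all its neighbours in $D$. Hence the endpoint of that edge inside $D$ is a non-center neighbour $w$ of $S$ that would newly dominate at least one vertex. So a valid $v^\star$ with positive gain exists, and $|D|$ strictly increases. The loop therefore ends after at most $n-1$ iterations with $D=V$.

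At termination, every vertex outside $S$ lies in $D\setminus S$, so it has a neighbour in $S$. Attaching each such vertex by one edge adds leaves to the tree on $S$. The result is a spanning tree with $n-1$ edges.

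\emph{Complexity.}
\begin{itemize}
\item Computing all degrees takes $O(n+m)$ time.
\item Each of the at most $n$ iterations can be done naively in $O(n+m)$ time. Mark the sets $S$ and $D$ with boolean arrays. Scan every vertex $w\notin S$. Test whether $w$ is adjacent to $S$, and count its neighbours (plus itself) that are not yet in $D$, by scanning its adjacency list. Over all $w$ this costs $\sum_w \deg(w)=O(m)$, plus $O(n)$ overhead.
\item Choosing the attachment center for $v^\star$ and updating $D$ by scanning $N(v^\star)$ costs $O(\deg v^\star)$.
\item The loop thus costs $O(n(n+m))=O(nm)$, because $m\ge n-1$ for connected $G$. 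The degenerate case $n=1$ is trivial.
\item The final attachment step scans each adjacency list once, in $O(n+m)$.
\item The tree dynamic program on $T$: root $T$ with an iterative DFS or BFS to avoid recursion depth. Evaluate $I_u$ and $O_u$ in reverse order in $O(n)$. Recover $A$ by a top-down traceback, also $O(n)$: include $u$ if its parent was excluded, otherwise take the better option.
\end{itemize}

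\emph{Memory.} Storage consists of the adjacency lists, the degree array, the membership flags for $S$ and $D$, the tree edge list ($n-1$ edges), and the DP arrays. The total is $O(n+m)$.

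\emph{Main obstacle.} The only delicate point is the progress argument: showing that some candidate adjacent to $S$ always has positive gain while $D\neq V$. This is what bounds the number of iterations by $n$. The argument above, using connectivity and the fact that every vertex of $D\setminus S$ is adjacent to $S$, settles it. The remaining accounting is routine, provided candidate gains are recomputed by scanning adjacency lists rather than by any $O(n^2)$-per-iteration method.
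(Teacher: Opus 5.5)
Your proof is correct, and its complexity accounting matches the paper's essentially line for line: at most $n-1$ center-growth passes at $O(n+m)$ each, $m\ge n-1$ for connected $G$, linear-time leaf attachment, and the $O(n)$ tree dynamic program with traceback, all within $O(n+m)$ memory. Your argument that the output is a spanning tree and that the loop never stalls is a useful addition that the paper leaves implicit; note, though, that the $n-1$ bound on passes already follows from $S$ gaining one vertex per pass (connectivity gives a candidate whenever $S\neq V$), so the positive-gain property is not needed for that bound.
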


\begin{proof}
The degree computation costs $O(n+m)$. Each pass through the while-loop adds one new center to $S$, so the loop executes at most $n-1$ times. In one pass, the algorithm scans adjacency lists to identify the admissible candidates and evaluate their gains, which costs $O(n+m)$ time. Therefore the full center-growth stage costs $O(n(n+m))$.

Because $G$ is connected, $m\ge n-1$, so $O(n(n+m))=O(nm)$. After the center set has been chosen, the remaining vertices are attached to the centers by scanning their neighborhoods once, which costs $O(n+m)$. The exact minimum vertex-cover size of the final tree is then computed using Eq.~\eqref{eq:treevcdp}, which costs $O(n)$ because the tree has $n-1$ edges. Recovering a corresponding minimum vertex cover by traceback through the dynamic-programming table also costs $O(n)$. Hence the total running time is $O(nm)$.

For memory, the adjacency lists use $O(n+m)$ space, and all additional sets, parent arrays, and dynamic-programming tables use only $O(n)$ space. Hence the total memory requirement is $O(n+m)$.
\end{proof}

In summary, Theorem~\ref{th:ancilla} reduces the ancilla-allocation problem to the search for a spanning tree with small vertex-cover number.  BFS and DFS provide simple traversal-based baselines and Algorithm~\ref{alg:connectedstar} is a purpose-built heuristic for constructing trees with low ancilla cost.

\end{document}